\newtheorem{teo}{Theorem}
 \definecolor{BLACK}{gray}{0}
 \definecolor{WHITE}{gray}{1}
 \definecolor{RED}{rgb}{1,0,0}
 \definecolor{GREEN}{rgb}{0,1,0}
 \definecolor{BLUE}{rgb}{0,0,1}
 \definecolor{CYAN}{cmyk}{1,0,0,0}
 \definecolor{MAGENTA}{cmyk}{0,1,0,0}
 \definecolor{YELLOW}{cmyk}{0,0,1,0}
\begin{document}

\title{Entanglement monotones from complementarity relations}

\author{Marcos L. W. Basso}
\email{marcoslwbasso@hotmail.com}
\address{Departamento de F\'isica, Centro de Ci\^encias Naturais e Exatas, Universidade Federal de Santa Maria, Avenida Roraima 1000, Santa Maria, Rio Grande do Sul, 97105-900, Brazil}
\address{New adress: Centro de Ci\^encias Naturais e Humanas, Universidade Federal do ABC, Avenida dos Estados 5001, 09210-580 Santo Andr\'e, S\~ao Paulo, Brazil}

\author{Jonas Maziero}
\email{jonas.maziero@ufsm.br}
\address{Departamento de F\'isica, Centro de Ci\^encias Naturais e Exatas, Universidade Federal de Santa Maria, Avenida Roraima 1000, Santa Maria, Rio Grande do Sul, 97105-900, Brazil}

\selectlanguage{english}%

\begin{abstract} 
Bohr's complementarity and Schr\"{o}dinger's entanglement are two prominent physical characters of quantum systems. In this article, we formally connect them.
It is known that complementarity relations for wave-particle duality are saturated only for pure, single-quanton, quantum states. For mixed states, the wave-particle quantifiers never saturate a complementarity relation and can even reach zero for a maximally mixed state. To fully characterize a quanton, it is not enough to consider its wave-particle aspect; we have also to regard its quantum correlations with other systems. Here we prove that for any complete complementarity relation involving predictability and visibility measures that satisfy the criteria established in the literature, the corresponding quantum correlations are entanglement monotones. Therefore, we formally connect entanglement monotones with complementarity relations without appealing to a particular measure.

\end{abstract}

\keywords{Complementarity relations; Entanglement monotones; Predictability; Visibility}

\maketitle


Wave-particle duality is one of the quantum features that occupies a central position in the realm of quantum weirdness, turning apart the quantum world from the classical world. This aspect is the best known example of Bohr's complementarity principle, which states that quantum systems, or quantons \cite{Leblond}, possess properties that are equally real but mutually exclusive \cite{Bohr}. The first efforts to quantify the wave-particle duality were made in Refs. \cite{Zurek, Engle, Yasin}, wherein quantitative measures of wave and particle properties were built and constrained in a complementarity inequality: \begin{equation}
    P^2 + V^2 \le 1, \label{eq:cr1}
\end{equation}
where $P$ is the path predictability and $V$ is the interference pattern visibility. However, with the quantitative formulation of the wave-particle duality, it was noticed that these aspects of a quantum system are not necessarily mutually exclusive. An experiment can provide partial information about the wave and the particle nature of a quantum system, as was experimentally verified in Ref. \cite{Auccaise}. Formalizing, for the first time, the quantitative formulation of the wave-particle duality, D\"urr \cite{Durr} and Englert \textit{et al.} \cite{Englert} devised reasonable criteria for checking the reliability of newly defined predictability and visibility measures. Another important step towards the quantification of the quanton's wave aspect was the realization that the quantum coherence \cite{Baumgratz} would be a good generalization for visibility measures \cite{Bera, Bagan,  Mishra}. 

However, complementarity relations like the one in Eq. (\ref{eq:cr1}) are saturated only for pure, single-quanton, quantum states. This kind complementarity inequality does not really catch a balanced exchange between $P$ and $V$ because the inequality permits, for instance, that $V$ decreases due to the interaction of the system with its environment while $P$ can remain unchanged. Or even worse, $P$ can decrease together with the visibility of the system, thus allowing for the extreme case $P = V = 0$. Hence, something must be missing from Eq. (\ref{eq:cr1}). As noticed by Jakob and Bergou \cite{Janos}, this lack of local information about the system is due to entanglement. This means that the information is being shared with another system, and this kind of quantum correlation can be seen as responsible for the loss of purity of each subsystem such that, for pure maximally entangled states, it is not possible to obtain information about the local properties of the subsystems. As showed by these authors, for a particular pair of measures $P$ and $V$, the entanglement concurrence \cite{Wootters} is recognized as the appropriate quantum correlation measure in a bipartite state of two qubits that completes relation (\ref{eq:cr1}). Later, they extended this idea for composite bipartite systems of arbitrary dimension \cite{Jakob, Bergou}, suggesting that there must exist a complementarity relation between the information of the local properties of each subsystem and the entanglement of the composite system. The triality relations for qubits obtained in Ref. \cite{Janos} were recently verified experimentally in Ref. \cite{Qian}. These relations are also known as complete complementarity relations (CCRs). Considering that we can always purify our system of interest and think of it as part of a multipartite pure quantum system, in Ref. \cite{Marcos} we proposed a general formalism where CCRs are obtained by exploring the purity of a multipartite quantum system, thus encompassing and generalizing the CCRs of Refs. \cite{Jakob, Bergou}. Besides, we extended this formalism for bipartite mixed states in Ref. \cite{Wayhs}.

In this article, considering this formalism to obtain CCRs in which the predictability and visibility measures satisfy the criteria established by  D\"urr \cite{Durr} and Englert \textit{et al.} \cite{Englert}, we prove that the associated quantum correlation measures are entanglement monotones for global pure cases, such that the extension for the mixed case can be given through the convex roof procedure. This is a very interesting and general result, since it formally connects entanglement monotones with complementarity relations, without appealing to specific measures. In addition, this result summarizes all the CCRs known in the literature, as well as opens the possibility for establishing new entanglement measures whenever a complementarity relation that satisfies the criteria of Refs. \cite{Durr, Englert} is derived. Besides, by considering complementarity relations derived in Ref. \cite{Maziero}, we obtain four entanglement monotones. Two of them are already well known entanglement measures, i.e., the linear and von Neumann entropies, while the other two were obtained quite recently, in Ref. \cite{Leopoldo}, by exploring the relation between uncertainty and complementarity. It is worth mentioning that these CCR's were recently verified using IBM Quantum Experience quantum computers, where we performed experimental tests of these complementarity relations applied to a particular class of
one-qubit quantum states and also for random quantum states of one, two, and three qubits \cite{Mauro}.


Following Ref. \cite{Zhu}, let us denote $\mathcal{D}(\mathcal{H)}$ the set of density matrices on $\mathcal{H} \simeq \mathbb{C}^d$ and $U(d)$ the group of unitary operators on $\mathcal{H}$. In addition, let $\mathcal{F}_U$ be the set of local unitarily invariant functions on $\mathcal{D}(\mathcal{H)}$ such that each function $f \in \mathcal{F}_U$ is defined on the space of density matrices for each positive integer $d = \dim \mathcal{H}$. For any given $d$, the function satisfies
\begin{align}
    f(U\rho U^{\dagger}) = f(\rho)\ \forall\  \rho \in \mathcal{D}(\mathcal{H)} \text{ and } U \in U(d). \label{eq:unit}
\end{align}
Therefore $f(\rho)$ can be taken as a function of the eigenvalues of $\rho$. By restricting ourselves to the set $\mathcal{F}_{Uc} \subset \mathcal{F}_{U}$ of local unitarily invariant and real concave functions on $\mathcal{D}(\mathcal{H)}$, then each function $f \in \mathcal{F}_{Uc}$ satisfies Eq. (\ref{eq:unit}) and
\begin{align}
    f(\lambda \rho + & (1 - \lambda) \sigma) \ge \lambda f(\rho) + (1 - \lambda) f(\sigma) \nonumber \\
    & \forall \rho, \sigma \in \mathcal{D}(\mathcal{H)}, \lambda \in[0,1],
\end{align}
for any given $d$. Now, let $\mathcal{H} \simeq \mathcal{H}_A \otimes \mathcal{H}_B$ be a bipartite Hilbert space of a bipartite quantum system $A$ and $B$ with dimension $d_A = d_B = d$. The fact that the dimensions of the subsystems are the same is not essential here. Any function $f \in \mathcal{F}_{Uc} $ can be used to construct an entanglement monotone $E_f$ on $\mathcal{D}(\mathcal{H)}$ as follows. For a pure state $\ket{\Psi} \in \mathcal{H}$,
\begin{align}
    E_f(\Psi) := f(\Tr_B(\ketbra{\Psi})) = f(\rho_A). \label{eq:pure}
\end{align}
Then, it is possible to extend the monotone for mixed states $\rho \in \mathcal{D}(\mathcal{H)}$ by the convex roof construction:
\begin{align}
    E_f(\rho) := \min_{\{p_j, \ket{\Psi_j}\}} \sum_j p_j E_f(\Psi_j), \label{eq:mixed}
\end{align}
where the minimization runs over all pure state ensembles $\{p_j, \ket{\Psi_j}\}$ for which $\rho = \sum_j p_j \ketbra{\Psi_j}$. Conversely, the restriction to pure states of any entanglement monotone is identical to $E_f$ for a given $f \in \mathcal{F}_{Uc}$. These facts were stated as a theorem in Ref. \cite{Vidal}.

Besides that, let $\Delta_d$ be the probability simplex of probability vectors with $d$ components. A function on $\Delta_d$ is symmetric if it is invariant under permutations of the components of probability vectors. Let $\mathcal{F}_s$ be the set of symmetric functions on the probability simplex such that each function $f \in \mathcal{F}_s$ is defined for any given positive integer $d$. The authors in Ref. \cite{Zhu} showed that any symmetric function  $f \in \mathcal{F}_s$  can be lifted to an unitarily invariant function on $\mathcal{D}(\mathcal{H)}$: $\hat{f}(\rho) := f(\text{eig}(\rho)), \ \forall \rho \in \mathcal{D}(\mathcal{H)},$ where $\text{eig}(\rho)$ are the eigenvalues of $\rho$. Conversely, any unitarily invariant function $f$ on the space of density matrices defines a symmetric function on the probability simplex when restricted to diagonal density matrices: $\check{f}(p) := f(\rho_{diag}), \ \forall p \in \Delta_d,$ where $p\in \Delta_d$ represents a probability distribution in $\Delta_d$, which in this case is given by $\rho_{diag}$.  Therefore, for any concave function $f \in  \mathcal{F}_s$, $E_f$ defined by Eqs. (\ref{eq:pure}) and (\ref{eq:mixed}) is an entanglement monotone. Conversely, the restriction to pure states of any entanglement monotone is identical to $E_f$ for a given concave function $f \in  \mathcal{F}_s$. This claim was proved in Ref. \cite{Zhu}, and is essential for proving the theorem we give in the sequence.

Before stating the main result of this work, let us consider an example of how it is possible to obtain an entanglement monotone using complementarity relations. In Ref. \cite{Marcos}, we considered a quantum system described by the density operator $\rho_A$ of dimension $d_A$. The relative entropy of coherence of this state is defined as \cite{Baumgratz}
\begin{align}
    C_{re}(\rho_A) = \min_{\iota \in I} S_{vn}(\rho_A||\iota),
\end{align}
where $I$ is the set of all incoherent states, $S_{vn}(\rho_A||\iota) = \Tr(\rho_A \log_2 \rho_A - \rho_A \log_2 \iota)$ is the relative entropy, and $S_{vn}(\rho)$ denotes the von Neumann entropy of $\rho$. The minimization procedure leads to $\iota = \rho_{Adiag} = \sum_{i = 1}^{d_A} \rho^A_{ii} \ketbra{i}$. Thus 
\begin{align}
    C_{re}(\rho_A) = S_{vn}(\rho_{Adiag}) - S_{vn}(\rho_A) \label{eq:cre}.
\end{align}
Since $C_{re}(\rho_A) \le S_{vn}(\rho_{Adiag})$, it is possible to obtain an incomplete complementarity relation from this inequality:
\begin{equation}
    C_{re}(\rho_A) + P_{vn}(\rho_A) \le \log_2 d_A \label{eq:cr6},
\end{equation}
with $P_{vn}(\rho_A) := \log_2 d_A - S_{vn}(\rho_{Adiag}) = \log_2 d_A + \sum_{i = 0}^{d_A - 1} \rho^A_{ii} \log_2 \rho^A_{ii}$ being a good measure of predictability, already defined in Ref. \cite{Maziero}, while $C_{re}(\rho_A)$ is a bone-fide measure of visibility. It is worth mentioning that the possibility of defining $P_{vn}(\rho_A)$ in this way is due to the fact that the diagonal elements of $\rho_A$ can be interpreted as a probability distribution, which is a consequence of the properties of the density matrix $\rho_A$. Since the complementarity relation in Eq. (\ref{eq:cr6}) can be considered incomplete due to the presence of correlations with other systems, then, by considering $\rho_A$ as part of a bipartite pure quantum system $\ket{\Psi}$, it follows that 
\begin{align}
    E_f(\Psi) := \log_2 d_A - P_{vn}(\rho_A) - C_{re}(\rho_A) \label{eq:entmon}
\end{align}
is an entanglement monotone of the form $E_f(\Psi) := f(\Tr_B(\ketbra{\Psi})) = f(\rho_A)$, that can be extended to the mixed case by the convex roof procedure. To see this, it is enough to note that since $P_{vn}(\rho_A)$ and $C_{re}(\rho_A)$ are convex functions of $\rho_A$, then  $f(\rho_A)$ is concave \cite{Roberts}. In addition, the sum $P_{vn}(\rho_A) + C_{re}(\rho_A)$ is invariant under local unitary operations. Besides, an equivalent but normalized definition is given by $ E_f(\Psi) := 1 - \frac{1}{\log_2 d_A}(P_{vn}(\rho_A) + C_{re}(\rho_A))$. Actually, the entanglement monotone defined in Eq. (\ref{eq:entmon}) is a well known measure of entanglement for global pure cases \cite{Vedral}: $E_f(\Psi) = S_{vn}(\rho_A)$, which leads to the following complete complementarity relation
\begin{equation} C_{re}(\rho_A) + P_{vn}(\rho_A) + S_{vn}(\rho_A) = \log_2 d_A \label{eq:cre_}.
\end{equation}
Therefore, we can use any complementarity relation of the type in Eq. (\ref{eq:cr6}) to obtain entanglement monotones $E_f$ defined by Eqs. (\ref{eq:pure}) and (\ref{eq:mixed}), provided that the measures of predictability and visibility satisfy the criteria established in Refs. \cite{Durr, Englert}, and stated in Appendix \ref{sec:appe}.

\begin{teo}
Let $P(\rho_A) + C(\rho_A) \le \alpha$ be a complementarity relation for the state $\rho_A$ such that it saturates only if $\rho_A$ is pure, with $P(\rho_A)$ and $C(\rho_A)$ being bone-fide measures of predictability and visibility, respectively, and $\alpha\in \mathbb{R}$ with $\alpha>0$. The quantity
\begin{equation}
    E_f := \alpha - P(\rho_A) - C(\rho_A) \label{eq:entmon_}
\end{equation}
is an entanglement monotone, as defined by Eqs. (\ref{eq:pure}) and (\ref{eq:mixed}), when restricted to the Schmidt's coefficients.
\end{teo}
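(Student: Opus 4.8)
The plan is to read $E_f$ as a spectral function of the reduced state and then reduce everything to the Vidal--Zhu criterion quoted above. For a global pure state $\ket{\Psi}$ I set $\rho_A=\Tr_B(\ketbra{\Psi})$ and write $E_f(\Psi)=f(\rho_A)$ with $f(\rho_A):=\alpha-P(\rho_A)-C(\rho_A)$, exactly in the form of Eq.~(\ref{eq:pure}). By Vidal's theorem---sharpened by Zhu et al. to symmetric concave functions on the simplex $\Delta_{d_A}$---it suffices to show that $f\in\mathcal{F}_{Uc}$, i.e. that $f$ is (i) concave and (ii) invariant under local unitaries on $\mathcal{H}_A$. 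The convex-roof extension (\ref{eq:mixed}) and the LOCC-monotonicity are then supplied for free by that theorem, so the whole proof collapses to verifying (i) and (ii).

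First I would dispatch concavity. The criteria for bona fide predictability and visibility measures collected in appendix~\ref{sec:appe} make both $P(\rho_A)$ and $C(\rho_A)$ convex functions of $\rho_A$; hence $-(P+C)$ is concave and, since adding the constant $\alpha$ does not affect concavity, $f=\alpha-P-C$ is concave on $\mathcal{D}(\mathcal{H}_A)$. This is the verbatim generalization of the relative-entropy example, where convexity of $P_{vn}$ and $C_{re}$ produced the concave $f=S_{vn}(\rho_A)$.

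The hard part will be the local-unitary invariance $f(U\rho_A U^{\dagger})=f(\rho_A)$ for every $U\in U(d_A)$, precisely because $P$ and $C$ are separately basis dependent: $P$ is a function of the populations $\rho^A_{ii}$ while $C$ is a function of the coherences and vanishes on diagonal states. The guiding idea is that $P+C$ quantifies the \emph{total} local information about $A$ and must therefore depend on the spectrum of $\rho_A$ alone. I would argue this in two moves. The saturation hypothesis forces $P+C=\alpha$ for every pure $\rho_A$, and since unitaries preserve purity this fixes $f\equiv 0$ on the entire pure-state orbit, so there the basis dependences of $P$ and $C$ cancel exactly. To propagate the cancellation to mixed states I would invoke Schur--Horn, $\{\rho^A_{ii}\}\prec\{\lambda_i\}$, together with the Schur-convex character of $P$ imposed by the criteria: a legitimate visibility measure must restore exactly the predictability deficit $P(\{\lambda_i\})-P(\{\rho^A_{ii}\})$ generated by the off-diagonal elements, so that $P+C=P(\{\lambda_i\})$ becomes a symmetric function of the eigenvalues. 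This identifies $f$ with a symmetric function on $\Delta_{d_A}$ and secures invariance under $U(d_A)$; verifying that the criteria of appendix~\ref{sec:appe} genuinely enforce this rigid compensation is the step I expect to require the most care.

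With (i) and (ii) in hand, $f\in\mathcal{F}_{Uc}$ is concave, so Vidal's theorem promotes $E_f(\Psi)=f(\rho_A)$ to an entanglement monotone and extends it to arbitrary $\rho\in\mathcal{D}(\mathcal{H})$ through the convex roof (\ref{eq:mixed}). As a consistency check, the saturation hypothesis gives $E_f=0$ exactly when $\rho_A$ is pure, i.e. when $\ket{\Psi}$ factorizes, so $E_f$ vanishes on separable pure states and is strictly positive on entangled ones---precisely the normalization demanded of an entanglement measure.
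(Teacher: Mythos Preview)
Your concavity step coincides with the paper's: criterion C6 makes $P$ and $C$ convex, so $f=\alpha-P-C$ is concave. The divergence---and the gap---is in how you obtain unitary invariance.

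You attempt to show that $P(\rho_A)+C(\rho_A)$ is a spectral function of $\rho_A$ \emph{for an arbitrary fixed reference basis}, via the ``rigid compensation'' $C(\rho_A)=P(\{\lambda_i\})-P(\{\rho^A_{ii}\})$. Schur--Horn together with the Schur convexity of $P$ coming from C5 gives only the inequality $P(\{\rho^A_{ii}\})\le P(\{\lambda_i\})$; nothing in C1--C6 forces $C$ to close that gap exactly. In fact the paper's own $l_1$ pair is a counterexample: for the qubit state $\rho=\tfrac12\mathbb{I}+\tfrac14\sigma_x$ one has $P_{l_1}+C_{l_1}=1-2\sqrt{1/4}+2\cdot\tfrac14=\tfrac12$, whereas the same state expressed in its eigenbasis has populations $(3/4,1/4)$ and $P_{l_1}+C_{l_1}=1-\sqrt{3}/2\approx0.134$. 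So $P_{l_1}+C_{l_1}$ is \emph{not} unitarily invariant, and the compensation you flagged as delicate simply does not follow from the D\"urr--Englert axioms.

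The paper avoids this issue altogether. It evaluates $P$ and $C$ in the Schmidt basis of the purification $\ket{\Psi}$, i.e.\ in the eigenbasis of $\rho_A$. There $\rho_A=\sum_k\lambda_k\ketbra{\phi_k}$ is diagonal, so $C(\rho_A)=0$ and $E_f=\alpha-P(\rho_A)$ becomes a function of the Schmidt vector $\vec{\lambda}$ alone. Criterion C2 (permutation invariance of $P$) then makes this a \emph{symmetric} function on $\Delta_{d_A}$, and concavity of $\alpha-P$ restricted to diagonal states (C6) feeds directly into the Zhu--Vidal criterion. The upshot is that the entanglement monotone is really $\alpha-P(\vec{\lambda})$; the expression $\alpha-P-C$ agrees with it only in the Schmidt basis, not in a generic reference basis. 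Your route would have yielded the stronger, basis-independent statement, but that statement is false.
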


\begin{proof}
Since the measures $P(\rho_A)$ and $C(\rho_A)$  satisfy the criteria established in Refs. \cite{Durr, Englert}, then they are convex functions, what implies that $E_f := \alpha - P(\rho_A) - C(\rho_A) = f(\rho_A)$ is a concave function. Now, let $\ket{\Psi} \in \mathcal{H}_A \otimes \mathcal{H}_B$ be a purification of $\rho_A$, i.e., $\rho_A = \Tr_B \ket{\Psi}\bra{\Psi}$. Using the Schmidt decomposition $\ket{\Psi} = \sum_k \sqrt{\lambda_k}\ket{\phi_k}_A \otimes \ket{\psi_k}_B$, we can write $\rho_A =  \sum_k \lambda_k \ketbra{\phi_k}$, which implies that $C(\rho_A) = 0$ and $P(\rho_A) \neq 0$. But $P(\rho_A)$ must be invariant under permutations of the states' indexes, which implies that $P(\rho_A)$ is invariant under the permutations of the components of the probability vectors $\vec{\lambda}=(\lambda_{0},\cdots,\lambda_{d-1})$, and therefore $E_f = f(\rho_A)$ is invariant under the permutations of the components of $\vec{\lambda}$.
\end{proof}

An equivalent but normalized definition of Eq. (\ref{eq:entmon_}) is given by $E_f:= 1 - \frac{1}{\alpha}(P(\rho_A) + C(\rho_A))$. From Eq. (\ref{eq:entmon_}), it is straightforward to see that all states that maximize $E_f$ has the same form, i.e., these states are those in which the reduced states satisfy $P(\rho_A) + C(\rho_A) = 0$. From the criteria in Appendix \ref{sec:appe}, it is easy to see that these reduced states are the ones that are maximally mixed. Now, it is worth emphasizing that, to define an entanglement monotone, we have to consider the complementarity relation restricted to the Schmidt coefficients for the symmetric function, defined by Eq. (\ref{eq:entmon_}), to be able to be lifted to an unitarily invariant function according to Ref. \cite{Zhu}, as discussed before.

To exemplify the power of this theorem, let us consider the following incomplete complementarity relations derived in Ref. \cite{Maziero} by exploring the properties of the density matrix that describes the state of the subsystem A:
\begin{align}
& P_{l_{1}}(\rho_A) + C_{l_{1}}(\rho_A) \le d_A - 1, \label{eq:crl1} \\
&  P_{hs}(\rho_A) + C_{wy}(\rho_A)  \le (d_A - 1)/d_A \label{eq:crwy},\\
& P_{hs}(\rho_A) + C_{hs}(\rho_A)  \le (d_A - 1)/d_A. \label{eq:crhs}
\end{align}
The quantum coherence/visibility measures appearing in these relations are
\begin{align}
 & C_{l_{1}}(\rho_A) := \min_{\iota}||\rho_A-\iota||_{l_{1}} = \sum_{j \neq k} \abs{\rho^A_{jk}}, \\
& C_{wy}(\rho_A)  := \sum_{j}I_{wy}(\rho_A,|j\rangle\langle j|) = \sum_{j\neq k}\abs{\bra{j}\sqrt{\rho_A}\ket{k}}^2, \\
& C_{hs}(\rho_A) := \min_{\iota \in I}||\rho_A-\iota||_{hs}^{2} = \sum_{j \neq k} \abs{\rho^A_{jk}}^2,\label{eq:hsc}
\end{align}
where $I$ is the set of all incoherent states, the Hilbert-Schmidt's norm of a matrix $M\in\mathbb{C}^{d \times d}$ is defined as $\norm{M}_{hs}:=  \sqrt{\sum_{j,k} \abs{M_{jk}}^2}$, whereas the $l_1$-norm is given by $\norm{M}_{l_1} := \sum_{j,k}\abs{M_{jk}}$, $I_{wy}(\rho,\ketbra{j}) = -\frac{1}{2}\Tr ([\sqrt{\rho},\ketbra{j}]^2)$ is the Wigner-Yanase skew information \cite{Yu}, and $\{|j\rangle\}_{j=0}^{d_{A}-1}\equiv\{|j\rangle_{A}\}_{j=0}^{d_{A}-1}$ is a reference basis. The predictability measures are given by
\begin{align}
& P_{l_{1}}(\rho_A) := d_A - 1 - \sum_{j \neq k} \sqrt{\rho^A_{jj} \rho^A_{kk}}, \\
& P_{hs}(\rho_A) := S^{max}_l - S_l(\rho_{A diag}) = \sum_j (\rho^A_{jj})^2 - 1/d_A,
\end{align}
where $S_l(\rho_{A diag}):= 1 - \Tr \rho^2_{A diag}$ is the linear entropy. Moreover, the incomplete complementarity relation describes the local aspects of a quanton, and therefore is closely linked to the purity of the system. However, it is noteworthy that the complementarity relation given by Eq. (\ref{eq:crl1}) is not invariant under unitary transformations, while the others two given by Eqs. (\ref{eq:crwy}) and (\ref{eq:crhs}) are. Now, when restricted to the Schmidt coefficients, from Theorem 1, it follows that 
\begin{align}
& W_{l_1}(\rho_A) := d_A - 1 - P_{l_1}(\rho_A) - C_{l_1}(\rho_A),\\
& W_{wy}(\rho_A) := \frac{d_A - 1}{d_A} - P_{hs}(\rho_A) - C_{wy}(\rho_A),\\
& S_l(\rho_A) := \frac{d_A - 1}{d_A} - P_{hs}(\rho_A) - C_{hs}(\rho_A),
\end{align}
are entanglement monotones for global pure states, since $C_{\tau} = 0, \ \tau = l_1, wy, hs$ and $P_{\tau}$ is invariant under the permutation of the of the probability vectors $\vec{\lambda}=(\lambda_{0},\cdots,\lambda_{d-1})$ given by the Schmidt's coefficients, and therefore, the quantities above are invariant under local unitary transformations according to Ref. \cite{Zhu}. Besides, $S_l(\rho_A)$ is the well known linear entropy that completes the complementarity relation given by Eq. (\ref{eq:crhs}), which was considered in Refs. \cite{Jakob, Marcos, Huber}. In terms of the Schmidt's coefficients, the measures are given by
\begin{align}
& W_{l_1}(\rho_A) := \sum_{j \neq k}\sqrt{\lambda_j \lambda_k}, \label{eq:robu} \\
& W_{wy}(\rho_A) := \sum_j((\sqrt{\lambda_j})^2- \lambda_j^2) = 1 - \sum_j \lambda^2_j,
\end{align}
with $W_{wy}(\rho_A) = S_l(\rho_A)$ if written in terms of the Schmidt's coefficients. Besides, from Eq. (\ref{eq:robu}), it is possible to see that $W_{l_1}(\rho)$ is equal to the robustness of entanglement for global pure states \cite{Tarrach}. To see that $W_{l_1}(\rho_A)$ and $W_{wy}(\rho_A)$ are concave is straightforward, since all measures of predictability and visibility are convex functions. Also, the measures are invariant under the permutation of the components of the probability vectors, even when we are considering a reference basis in which the measures of quantum coherence are non-vanishing, given that we first restrict ourselves to the Schmidt coefficients and its eigenvectors, and then make a unitary transformation to the reference basis. 

However, there is another way to see that $W_{l_1}(\rho_A)$ and $W_{wy}(\rho_A)$ do not increase under LOCC (local quantum operations and classical communication). As the entanglement cannot increase by LOCC , any entanglement monotone $E$ must reverse the majorization order \cite{Olkin}. Since Schur concave functions are exactly the functions that invert the majorization order, i.e.,  $f(x) \ge f(y)$ whenever $x \prec y$, they can be used to quantify entanglement of pure states \cite{Nie}. Thus, the Schur concave criteria together with invariance under the permutation of the
Schmidt coefficients is enough to show that any function of the type of Eq. (\ref{eq:entmon_}) does not increase under LOCC, according to Ref. \cite{Mintert}.

Summing up, in this article we formally connected two notions of upmost importance for Quantum Mechanics: Entanglement and complementarity. We proved that it is possible to obtain entanglement monotones for global pure cases, and extend them for the mixed case through the convex roof procedure, from any complementarity relation using only the hypothesis that the predictability and visibility measures satisfy the criteria established in the literature. This makes the result very general, and summarizes all the complete complementarity relations known in the literature, as well as opens the path for establishing new entanglement measures whenever a complementarity relation that satisfies the criteria in Refs. \cite{Durr, Englert} is derived. 

\begin{acknowledgments}
This work was supported by the Coordena\c{c}\~ao de Aperfei\c{c}oamento de Pessoal de N\'ivel Superior (CAPES), process 88882.427924/2019-01, and by the Instituto Nacional de Ci\^encia e Tecnologia de Informa\c{c}\~ao Qu\^antica (INCT-IQ), process 465469/2014-0.
\end{acknowledgments}

\appendix

\section{Criteria for Predictability/Visibility Measures}
\label{sec:appe}
D\"urr \cite{Durr} and Englert \textit{et al.} \cite{Englert} established criteria that can be taken as a standard for checking for the reliability of newly defined predictability measures $P(\rho)$ and interference pattern visibility quantifiers $V(\rho)$. These required properties can be stated as follows:
\begin{itemize}
\item[C1] $P$ must be a continuous function of the diagonal elements of the density matrix and $V$ must be a continuous function of the elements of the density matrix.
\item[C2] $P$ and $V$ must be invariant under permutations of the base states indexes.
\item[C3] If $\rho_{jj}=1$ for some $j$, then $P$ must reach its maximum value, while $V$ must reach its minimum possible value.
\item[C4] If $\{\rho_{jj}=1/d\}_{j=0}^{d-1}$, then $P$ must reach its minimum value. In addition, if $\rho$ is pure, then $V$ must reach its maximum value.
\item[C5] If $\rho_{jj}>\rho_{kk}$ for some $(j,k)$, the value of $P$ cannot be increased by setting $\rho_{jj}\rightarrow\rho_{jj}-\epsilon$ and $\rho_{kk}\rightarrow\rho_{kk}+\epsilon$, for $\epsilon\in\mathbb{R}_{+}$ and $\epsilon\ll1$. And $V$ cannot be increased when decreasing $|\rho_{jk}|$ by an infinitesimal amount, for $j\ne k$.
\item[C6] $P$ and $V$ must be convex functions, i.e., $f(\sum_i \lambda_i \rho_i)\le \sum_i \lambda_i f(\rho_i)$, with $\sum_i \lambda_i = 1$, $\lambda_i \in [0,1]$, $f = P, V$ and for $\rho_i$ being valid density matrices.
\end{itemize}



\begin{thebibliography}{10}
\bibliographystyle{apsrev4-1}
\bibitem{Leblond} According with J. -M. L\'evy-Leblond, the term "quanton" was given by M. Bunge. The usefulness of this term is that one can refer to a generic quantum system without using words like particle or wave: J.-M. L\'evy-Leblond, On the Nature of Quantons, Science and Education 12, 495 (2003).

\bibitem{Bohr} N. Bohr, The quantum postulate and the recent development of atomic theory, Nature 121, 580 (1928).

\bibitem{Zurek} W. K. Wootters and W. H. Zurek,  Complementarity in the double-slit experiment: Quantum nonseparability and a quantitative statement of Bohr's principle, Phys. Rev. D 19, 473 (1979).
\bibitem{Engle} B.-G. Englert, Fringe visibility and which-way information: An inequality, Phys. Rev. Lett. 77, 2154 (1996).
\bibitem{Yasin} D. M. Greenberger and A. Yasin, Simultaneous wave and particle knowledge in a neutron interferometer, Phys. Lett. A 128, 391 (1988).
\bibitem{Auccaise} R. Auccaise, R. M. Serra, J. G. Filgueiras, R. S. Sarthour, I. S. Oliveira, and L. S. C\'eleri, Experimental analysis of the quantum complementarity principle, Phys. Rev. A 85, 032121 (2012).
\bibitem{Durr} S. D\"urr, Quantitative wave-particle duality in multibeam interferometers, Phys. Rev. A 64, 042113 (2001).
\bibitem{Englert} B.-G. Englert, D. Kaszlikowski, L. C. Kwek, and W. H. Chee, Wave-particle duality in multi-path interferometers: General concepts and three-path interferometers, Int. J. Quantum Inf. 6, 129 (2008).
\bibitem{Baumgratz} T. Baumgratz, M. Cramer, and M. B. Plenio, Quantifying coherence, Phys. Rev. Lett. 113, 140401 (2014).
\bibitem{Bera} M. N. Bera, T. Qureshi, M. A. Siddiqui, and A. K. Pati, Duality of quantum coherence and path distinguishability, Phys. Rev. A 92, 012118 (2015).
\bibitem{Bagan} E. Bagan, J. A. Bergou, S. S. Cottrell, and M. Hillery, Relations between coherence and path information, Phys. Rev. Lett. 116, 160406 (2016).
\bibitem{Mishra} S. Mishra, A. Venugopalan, and T. Qureshi, Decoherence and visibility enhancement in multi-path interference, Phys. Rev. A 100, 042122 (2019).
\bibitem{Janos} M. Jakob and J. A. Bergou, Quantitative complementarity relations in bipartite systems: Entanglement as a physical reality, Opt. Comm. 283, 827 (2010).
\bibitem{Wootters} W. K. Wootters, Entanglement of formation of an arbitrary state of two qubits, Phys. Rev. Lett. 80, 2245 (1998).
\bibitem{Jakob} M. Jakob and J. A. Bergou,  Complementarity and entanglement in bipartite qudit systems, Phys. Rev. A 76, 052107 (2007).
\bibitem{Bergou} M. Jakob and J. A. Bergou, Generalized complementarity relations in composite quantum systems of arbitrary dimensions, Int. J. Mod. Phys. B 20, 1371 (2006).
\bibitem{Qian} X.-F. Qian, K. Konthasinghe, K. Manikandan, D. Spiecker, A. N. Vamivakas, and J. H. Eberly, Turning off quantum duality, Phys. Rev. Research 2, 012016 (2020).

\bibitem{Marcos} M. L. W. Basso and J. Maziero, Complete complementarity relations for multipartite pure states,  J. Phys. A: Math. Theor. 53, 465301 (2020).
\bibitem{Wayhs} M. L. W. Basso, Jonas Maziero, Complete complementarity relations: Connections with EPR realism and decoherence and extension to mixed quantum states, EPL 135 60002 (2021).  
\bibitem{Maziero} M. L. W. Basso, D. S. S. Chrysosthemos, and J. Maziero,  Quantitative wave-particle duality relations from the density matrix properties, Quantum Inf. Process. 19, 254 (2020).
\bibitem{Leopoldo} M. L. W. Basso and  J. Maziero, An uncertainty view on complementarity and a complementarity view on uncertainty, Quantum Inf. Process. 20, 201 (2021) .
\bibitem{Mauro} M. B. Pozzobom, M. L. W. Basso, J. Maziero, Experimental tests of density matrix's properties-based complementarity relations, Phys. Rev. A 103, 022212 (2021).   

\bibitem{Zhu} H. Zhu, Z. Ma, Z. Cao, S.-M. Fei, and V. Vedral, Operational one-to-one mapping between coherence and entanglement measures, Phys. Rev. A 96, 032316 (2017).
\bibitem{Vidal} G. Vidal, Entanglement monotones, J. Mod. Opt. 47, 355 (2000).
\bibitem{Roberts} A. W. Roberts and D. Varberg, \textit{Convex Functions} (Academic Press, 1973).
\bibitem{Vedral} V. Vedral, M. B. Plenio, M. A., Rippin, and P. L. Knight, Quantifying entanglement, Phys. Rev. Lett. 78, 2275 (1997).
\bibitem{Yu} C.-s. Yu, Quantum coherence via skew information and its polygamy, Phys. Rev. A 95, 042337 (2017).
\bibitem{Huber} B. C. Hiesmayr and M. Huber, Multipartite entanglement measure for all discrete systems, Phys. Rev. A 78, 012342 (2008).
\bibitem{Tarrach} G. Vidal and R. Tarrach, Robustness of entanglement, Phys. Rev. A 59, 141 (1999).
\bibitem{Olkin} A. W. Marshall, I. Olkin, and B. C. Arnold, \textit{Inequalities: Theory of Majorization and Its Applications} (Springer, 2011).
\bibitem{Nie} F. Mintert and K. \.{Z}yczkowski, Wehrl entropy, Lieb conjecture, and entanglement monotones, Phys. Rev. A 69, 022317(2004)
\bibitem{Mintert} F. Mintert, C. Viviescas, and A. Buchleitner, Basic Concepts of Entangled States in Entanglement and Decoherence:
Foundations and Modern Trends (Springer, 2009).


\end{thebibliography}
\end{document}